\documentclass{amsart}

\usepackage{amssymb,latexsym,amsfonts,amsmath}
\usepackage{stmaryrd}
\usepackage{xcolor,colortbl}
\usepackage{multirow}
\usepackage{graphicx}
\usepackage{epstopdf}
\usepackage{etex}
\usepackage{epsfig}
\usepackage{psfrag}
\usepackage{subfigure}
\usepackage{url}
\usepackage{comment}
\usepackage{graphics}
\usepackage{graphicx, array, blindtext} 
\usepackage{pifont}
\newcommand{\cmark}{\ding{52}}%
\newcommand{\xmark}{\ding{55}}%
\usepackage{mathtools}

\usepackage{lipsum}
\usepackage[ruled,linesnumbered]{algorithm2e} 

\SetAlFnt{\small}
\SetAlCapFnt{\small}
\SetAlCapNameFnt{\small}
\SetAlCapHSkip{0pt}
\IncMargin{-\parindent}

\newtheorem{theorem}{Theorem}[section]

\newtheorem{definition}[theorem]{Definition}
\newtheorem{remark}[theorem]{Remark}
\newcommand{\Ze}{{\mathbb Z}}
\newcommand{\R}{{\mathbb{R}}}
\newcommand{\N}{{\mathbb{N}}}
\newcommand{\segcc}[1]{\ensuremath{{\left\llbracket#1\right\rrbracket}}}
\newcommand{\viz}{{\it viz.,}\xspace}
\newcommand{\ie}{{\it i.e.,}\xspace}

\usepackage{upgreek}
\usepackage{dsfont}

\topmargin  = 0.0 in
\leftmargin = 0.9 in
\rightmargin = 1.0 in
\evensidemargin = -0.10 in
\oddsidemargin =  0.10 in
\textheight = 8.5 in
\textwidth  = 6.6 in
\setlength{\parskip}{2mm}
\setlength{\parindent}{0mm}

\begin{document}

\begin{abstract}
The paper addresses the issue of reliability of complex embedded control systems in the safety-critical environment. In this paper, we propose a novel approach to design controller that (i) guarantees the safety of nonlinear physical systems, (ii) enables safe system restart during runtime, and (iii) allows the use of complex, unverified controllers ({\it e.g.}, neural networks) that drive the physical systems towards complex specifications. We use abstraction-based controller synthesis approach to design formally verified controller that provides application and system-level fault tolerance along with safety guarantee. Moreover, our approach is implementable using commercial-off-the-shelf (COTS) processing unit. To demonstrate the efficacy of our solution and to verify the safety of the system under various types of faults injected in applications and in underlying real-time operating system (RTOS), we implemented the proposed controller for the inverted pendulum and three degree-of-freedom (3-DOF) helicopter. 
\end{abstract}

\title[Software Fault Tolerance for Cyber-Physical Systems]{Software Fault Tolerance for Cyber-Physical Systems via Full System Restart}

\author[P. Jagtap]{Pushpak Jagtap$^1$} 
\author[F. Abdi]{Fardin Abdi$^2$} 
\author[M. Rungger]{Matthias Rungger$^3$}
\author[M. Zamani]{Majid Zamani$^1$} 
\author[M. Caccamo]{Marco Caccamo$^4$}

\address{$^1$Department of Electrical and Computer Engineering, Technical University of Munich, Germany.}
\email{\{pushpak.jagtap,zamani\}@tum.de}
\address{$^2$Uber, USA.}
\address{$^3$Corporate Research Center, ABB, Switzerland.}
\address{$^4$ Department of Mechanical Engineering, Technical University of Munich, Germany.}
\maketitle

\section{Introduction}
With the increased use of embedded systems in various safety-critical environments, these systems are expected to provide high-performance with reliability. Delivering high-performance drives the need for more complex systems\footnote{Such systems usually use the operating system~(OS) primitives to perform I/O or to set up concurrent threads~\cite{lee2008cyber}, utilize vendor developed drivers and use open source libraries~\cite{sulaman2014development}.}. As a consequence, it increases the possibilities for errors and makes formal verification more challenging. 

In the safety-critical environment, the use of complex embedded control systems where one or more control applications run on top of the real-time operating system (RTOS) and share the resources may lead to safety violations\footnote{In this paper, safety means not to violate the constraints of the physical components.} due to various software level faults. There are two main root causes of such faults. First, the control application may issue a set of unsafe commands due to the incorrect logic (bugs) or fail to generate any commands at all~(referred to as \emph{application-level faults}). Second, even with a bug-free control application, faults in underlying software layers such as the RTOS can disrupt the execution of the controller and jeopardize the safety~(usually referred to as \emph{system-level faults}). Ideally, all the components of these systems including the RTOS must be formally verified to ensure that they are fault-free. However, due to the high complexity, formal verification of the entire platform is very difficult. Therefore, designing architectures that enable the system designers to utilize components such as RTOS and vendor drivers without requiring to prove their correctness to guarantee safety is very important. 

In this work, we proposed a novel approach to design controller that provides safety guarantee on the physical component in presence of application-level and system-level faults. Moreover, the proposed solution provides \emph{fault-tolerance} and \emph{liveliness} guarantees only using one commercial-off-the-shelf (COTS) computing platform. The proposed approach uses full system restart to recover from such application and system-level faults. However, restarting in the safety-critical environment is very challenging. In our previous work \cite{7945009}, we provided a solution for designing controllers ensuring fault-tolerance and safety for linear physical systems. Designing such controllers become very challenging if the physical components exhibit nonlinear dynamics (which is the case in most of real applications). To address this, in this paper we provide a procedure for the synthesis of abstraction-based correct-by-construction controllers for nonlinear physical systems that enables the entire computing system to be safely restarted at runtime. This controller can keep the nonlinear control system inside a subset of safety region, only by updating the actuator input at least once after every system restart. In this paper, we refer to this controller as \emph{Base Controller}~(BC). 

Restarting a system is an effective approach for recovery from unknown faults at runtime with a very predictable outcome. As soon as a fault occurs that disrupts the execution of critical software components, a hardware watchdog timer~(WD) restarts the system. During a restart, a fresh image of all the software~(middleware, RTOS, and applications) is loaded from a read-only storage which recovers the system into an operational state. Prior to this work, restarting was proposed as a way to increase the availability of non-safety critical systems~\cite{candea2001recursive,Candea03crash-onlysoftware,candea2003jagr,candea2004microreboot,vaidyanathan2005comprehensive,garg1995analysis,huang1995software}. Alongside, partial restarting of safety-critical systems using extra hardware was investigated in~\cite{fardin2016reset,bak2009system}. To the best of our knowledge, this is the first work that proposes safe restarting of the entire system in a safety-critical environment containing nonlinear physical components.

Having only BC and the WD mechanism which enables restarting, allows the system to remain safe, tolerate faults and recover from them. However, it does not make any progress towards its mission goal. To address this issue, BC is complemented with a \emph{Mission Controller}~(MC) ({\it e.g.}, a neural network) and a \emph{Decision Module}~(DM). The MC is an unverified, high-performance, complex controller that drives the system towards the mission setpoints. It may contain unsafe logic or bugs that jeopardize safety. To maximize the progress towards the mission goals, in every control cycle, DM checks the MC command. If it satisfies the safety requirements, DM allows it to be sent to the actuators. Otherwise, BC command is applied to the system. By doing so, MC drives the system for as long as possible, and, whenever it is not possible, BC takes the control. The logical view of this design is depicted in Figure~\ref{fig:architecture}. 

In the proposed design, the only components that need to be verified for correct functionality are BC, DM and Flushing Task. Any fault in the system software~(System-Level or Application-Level) that results in a fail-silent failure~(also known as fail-stop) of these two components leads to WD triggering a system-wide restart and recovery. However, our design does not protect the system from faults that alter the logic of BC or DM at execution times. In summary, this design enables the system to provide formal safety guarantees by verifying only the correctness of BC, DM, and Flushing Task instead of entire MC, RTOS, and middleware.

The key contributions of our work are:
\begin{itemize}
	\item Construction of formally verified base controllers for safety-critical applications with nonlinear physical components which guarantee safe full system restart for application and system level fault tolerance. 
	\item Tolerating application-level faults as well as system-level faults using only one COTS processing unit.
	\item Empirical validation of both the practicality of our proposed design and the safety guarantees through fault-injection testing on a prototype controller for the nonlinear inverted pendulum system and a 3-DOF helicopter.
\end{itemize}


\begin{figure}[ht]
	\begin{center}
		\includegraphics[width=0.7\textwidth]{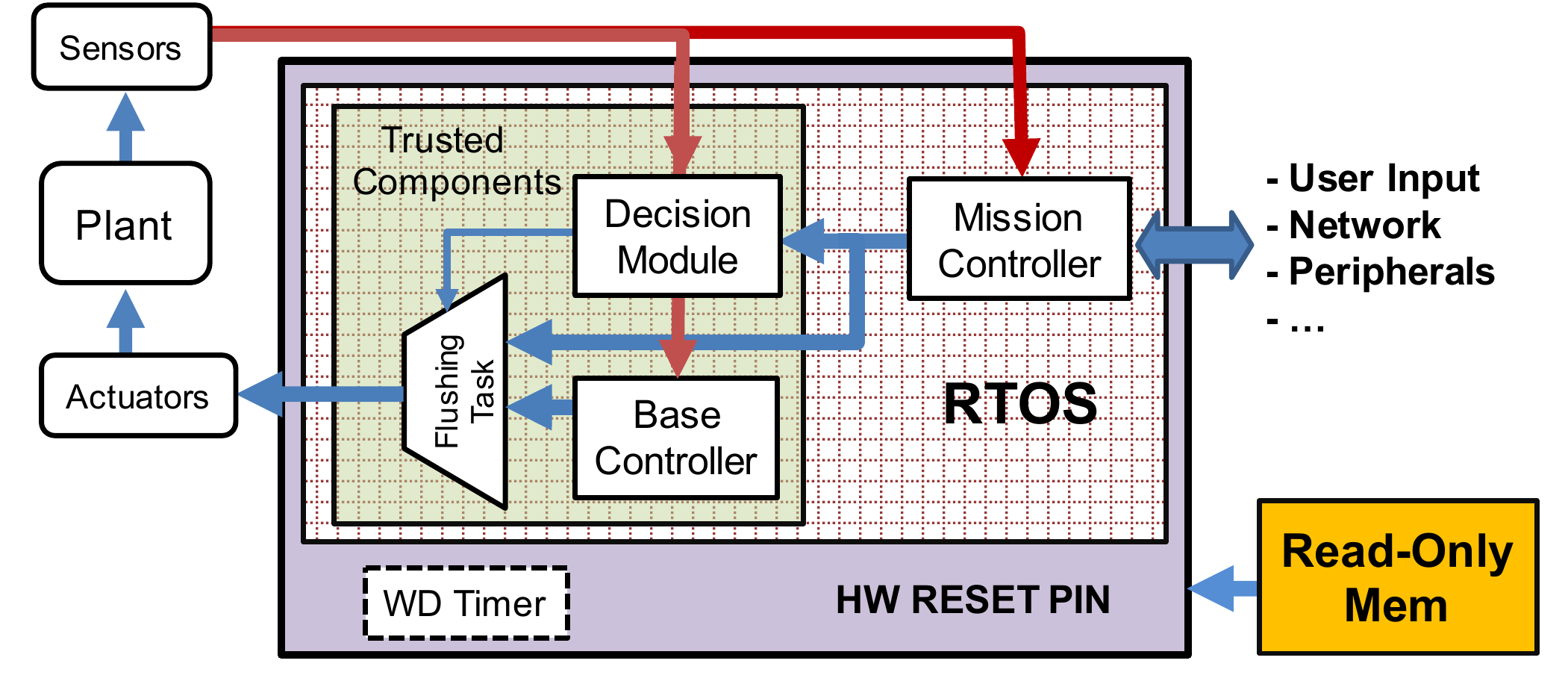}
		\caption{The logical view of the proposed design.}
		\label{fig:architecture}
	\end{center}
\end{figure}

\section{Related Work}
The concept of utilizing an unverified, complex controller along with a simple, verified safety controller for fault tolerance was initially proposed as Simplex Architecture in~\cite{sha1998dependable, Sha01usingsimplicity, sha1996evolving}. In earlier simplex designs, fault tolerance was achieved in one of two ways. In some of these designs such as~\cite{sha1998dependable,seto1999engineering,Sha01usingsimplicity,sha1996evolving,crenshaw2007simplex}, all three components (safety controller, complex controller, and decision unit) share the same computing hardware (processor) and software platform (OS, middleware). As a result, these designs only protect the safety against the faults in the application logic of the complex controller. And, there is no guarantee of the correct behavior in the presence of system-level faults. Our proposed design protects the system from both application-level and system-level faults.

Some Simplex-based designs such as System-Level Simplex~\cite{bak2009system}, Secure System Simplex Architecture (S3A)\cite{mohan2013s3a}, and other variants \cite{vivekanandan2016simplex} run the safety controller and the decision logic on an isolated, dedicated hardware unit. By doing so, the trusted components are protected from the faults in the complex subsystem. However, exercising System-Level Simplex design on most COTS multicore platforms/SoC~(system on chip) is challenging. The majority of commercial multicore platforms is not designed to achieve strong inter-core fault isolation due to the high-degree of hardware resource sharing. For instance, a fault occurring in a core with the highest privilege level may compromise power and clock configurations of the entire platform. To achieve full isolation and independence, one has to utilize two separate boards/systems. In contrast, the approach proposed in this paper needs only one processor and tolerates system-level faults. 

Note that, the control domain of the proposed BC depends on the system dynamics and the restart time of the platform. Increased restart time, shrinks the domain of the BC. For a given system, it may be empty; meaning that the dynamics of the system does not allow a controller with such properties to exist. System-Level Simplex does not have this limitation because it uses a dedicated hardware that is not impacted by faults~(or restarts) in the complex controller unit. The proposed approach is especially suitable for the Internet of Things~(IoT) applications, requiring increased robustness at low cost and without using extra hardware as System-Level Simplex requires.

The notion of restarting as a means of recovery from faults and improving system availability is previously proposed in the literature. These approaches are generally divided into two categories, \viz \textit{i}) \emph{revival}, reactively restart a failed component and \textit{ii}) \emph{rejuvenation}, prophylactically restart functioning components to prevent state degradation~\cite{candea2004improving}. Our approach, as described in this paper, fits in the former category. However, with slight modification, our design can incorporate periodic self-triggered restarts to prevent future unscheduled unavailable times. In the second form, this work can also be categorized in the latter category. 

Most of the previous works on restarting are proposed for traditional \textit{non} safety-critical computing systems such as servers and switches. Authors in \cite{candea2001recursive} introduce recursively restartable systems as a design paradigm for highly available systems and use a combination of revival and rejuvenation techniques. Earlier literature \cite{Candea03crash-onlysoftware,candea2003jagr,candea2004microreboot} illustrates the concept of microreboot which consists of having fine-grain rebootable components and trying to restart them from the smallest component to the biggest one in the presence of faults. The works in~\cite{vaidyanathan2005comprehensive,garg1995analysis,huang1995software} focus on failure and fault modeling and try to find an optimal rejuvenation strategy for various systems. In this context, our previous work in Reset-Based Recovery~\cite{fardin2016reset} was an attempt to utilize restarting as a recovery method for computing systems in safety-critical environments. In \cite{fardin2016reset}, we used System-Level Simplex architecture and proposed to restart only the complex subsystem upon the occurrence of faults. This is feasible because the safety subsystem runs on a dedicated hardware unit and is not impacted by the restarts in the complex subsystem. The approach of the current paper is significantly different and uses only one hardware unit.

\section{Control System Description}
\label{sec:safety}
\subsection{Notations}
The symbols $ \N $, $ \N_0 $, $\Ze$, $ \R$, $\R^+,$ and $\R_0^+ $ denote the set of natural, nonnegative integer, integer, real, positive, and nonnegative real numbers, respectively. We use $ \R^{n\times m} $ to denote a vector space of real matrices with $ n $ rows and $ m $ columns. The identity matrix in $\R^{n\times n}$ is denoted by $I_n$ and zero matrix in $R^{n\times m}$ is denoted by $0_{n\times m}$. For $a,b\in(\R\cup\{-\infty, \infty\})^n$, $a\leq b$ component-wise, the closed hyper-interval is denoted by $\segcc{a,b}:=\R^n\cap ([a_1,b_1]\times [a_2,b_2]\times\ldots\times [a_n,b_n])$. We identify the relation $R\subseteq A\times B$ with the map $R:A\rightarrow2^B$ defined by $b\in R(a)$ iff $(a,b)\in R$. Given a relation $R\subseteq A\times B$. $R^{-1}=\{(b,a)\in B\times A\mid (a,b)\in R\}$. $Q\circ R$ denotes the composition of maps $Q$ and $R$, $Q\circ R(x)=Q(R(x))$. The map $R$ is said to be strict when $R(a)\ne\emptyset$ for every $a\in A$. 
\subsection{Nonlinear Control Systems}
\begin{definition}[Nonlinear control systems]
	A nonlinear control system is a tuple $\Sigma=(\mathbb{R}^n,\mathsf{U},\mathcal{U},f)$, where $\mathbb{R}^n$ is the state space; $\mathsf{U}\subseteq\mathbb{R}^p$ is a bounded input set; $\mathcal{U}$ is a subset of the set of all functions of time from $\mathbb{R}^+_0$ to $\mathsf{U}$; and $f$ is a locally Lipschitz continuous map from $\mathbb{R}^n\times\mathsf{U}$ to $\mathbb{R}^n$.
\end{definition}
The trajectory $\xi$ is said to be a solution of $\Sigma$ if there exists $\upsilon\in\mathcal{U}$ satisfying:
\begin{align}
\dot{\xi}(t)=f(\xi(t),\upsilon(t)),
\label{sys}
\end{align}
for any $t\in\mathbb{R}^+_0$. We emphasize that the locally Lipschitz continuity assumption on $f$ ensures existence and uniqueness of solution $\xi$ \cite{sontag2013mathematical}. We use notation $\xi_{x,\upsilon}(t)$ to denote the value of solution at time $t$ under the input signal $\upsilon$ and starting from initial state $x=\xi_{x,\upsilon}(0)$. 
%
\subsection{Formulating Safety}
\label{sec:safetydefinition}
In physical systems, maintaining all states and control inputs within safe limits is very important in order to avoid damages to the system itself or the environment around it. 
In this paper, we define safety region $\mathcal{S}$ as a subset of the state space. For example, one can define it as: 
\begin{itemize}
	\item polytope $\mathcal{S}= \{x\in\R^n\mid H_x \cdot x \leq h_x \}$ parameterized by $H_x\in\R^{q\times n}$, $h_x\in\R^q$, or
	\item ellipsoid $\mathcal{S}= \{x\in\R^n\mid \|L(x-y)\|_2\leq1 \}$ parameterized by $L\in\R^{n\times n}$ and $y\in\R^n$.
\end{itemize}
In a similar way, the bounds on operational ranges of control inputs can be expressed as:
\begin{itemize}
	\item polytope $\mathcal{S}_u = \{u \in\mathsf{U}\mid H_u \cdot u \leq h_u\}$ parameterized by $H_u\in\R^{\bar q\times p}$ and $h_u\in\R^{\bar q}$, or
	\item ellipsoid $\mathcal{S}_u= \{u\in\mathsf{U}\mid \|L_u(u-\overline u)\|_2\leq1 \}$ parameterized by $L_u\in\R^{p\times p}$ and $\overline u\in\mathsf{U}$.
\end{itemize}
The nonlinear control system $\Sigma$ is said to be safe if the states of the system remain inside $\mathcal{S}$ using only the control commands in $\mathcal{S}_u$.

\subsection{Reachable Set}
Consider a nonlinear control system as in (\ref{sys}) and a set $X_0\subset R^n$. The reachable set of states that can be reached starting from set $X_0$ under input signal $\upsilon$ at time $\uptau$ is given by $\mathbf{Reach}_{\uptau}(X_0,\upsilon):=\bigcup_{x\in X_0}\xi_{x,\upsilon}(\uptau)$. We use notation $\mathbf{Reach}_{[0,\uptau]}(X_0,\upsilon)$ to denote the reachable set that can be reached starting from $X_0$ under input signal $\upsilon$ up to time $\uptau$ and can be defined as   $\mathbf{Reach}_{[0,\uptau]}(X_0,\upsilon):=\bigcup_{t\in [0,\uptau]}\mathbf{Reach}_{t}(X_0,\upsilon)$. We use the notation $\overline{\mathbf{Reach}}_{\uptau}(X_0,\upsilon)$ to denote an over-approximation of the set $\mathbf{Reach}_\uptau(X_0,\upsilon)$.

\section{Design Approach}
\label{sec:methodology}


As depicted in Figure~\ref{fig:architecture}, the proposed design consists of three main components; Base Controller~(BC), Mission Controller~(MC) and Decision Module~(DM). 

The BC is a verified, reliable controller that is only concerned with safety. 
It does not make progress towards the mission set points of the system~(\ie it does not provide \emph{liveness}). The MC, on the other hand, is the main controller which is concerned with the mission-critical requirements. This controller may have complex logic, can be changed and upgraded while the system is running and may even contain unsafe logic and bugs. 
As an example, MC may be a neural network resulted from machine learning techniques.

All the components of the system run on top of the RTOS. The length of one control cycle of the system is $\uptau_c$. The $k$th control cycle refers to the period $[(k-1)\uptau_c, k\uptau_c]$, where $k \in \mathbb{N}$. The cycles count and the time origin are restarted after every system restart. Therefore, $k = 1$ always refers to the first cycle after the latest system restart. Furthermore, we assume that the length of the restart time\footnote{It includes the time for reloading the bootloader, OS, and the applications from the read-only storage, initializing the necessary sensors and peripheral, booting the OS and executing the control applications.}, $\ie$ $\uptau_r$, of the system is an integer multiple\footnote{ Restart time can be rounded up to match the closest $k\uptau_c$.} of $\uptau_c$ (\ie $\uptau_r = m \uptau_c$, where $m \in \mathbb{N}$). While the system is running, sensor values are sampled at $t=k \uptau_c - \epsilon$ where $\epsilon \ll \uptau_c$ and actuator inputs are updated at $t=k\uptau_c$.

In every control cycle, after MC runs and generates its output $u_{mc}$, DM evaluates the safety requirements under $u_{mc}$ and decides whether $u_{mc}$ can be applied to the actuators. Then, DM writes its output, along with the corresponding MC command and a timestamp~(cycle number) to a fixed memory address.

At the end of the control cycle, at time $k\uptau_c - \epsilon$ after sensors are sampled, BC runs and generates $u_{bc}$. Then a flushing task retrieves $u_{mc}$, $u_{bc}$, the decision of DM and the corresponding timestamp from the memory. If the timestamp matches with the current cycle number, $k$, it updates the actuator commands with $u_{mc}$ or $u_{bc}$ based on the decision of DM and resets watchdog timer (WD). Non-matching timestamps indicate that one or both of the DM and BC tasks did not execute or missed their deadlines. In such cases, the flushing task does not update the WD. Consequently, WD expires at $t = k\uptau_c$ and triggers a restart. Note that as a result of this mechanism, restarts are only triggered at times $t = k\uptau_c$ and do not occur in between control cycles. The steps are illustrated in Figure~\ref{fig:cycles}.



\begin{figure}[ht]
	\begin{center}
		\includegraphics[width=0.7\textwidth]{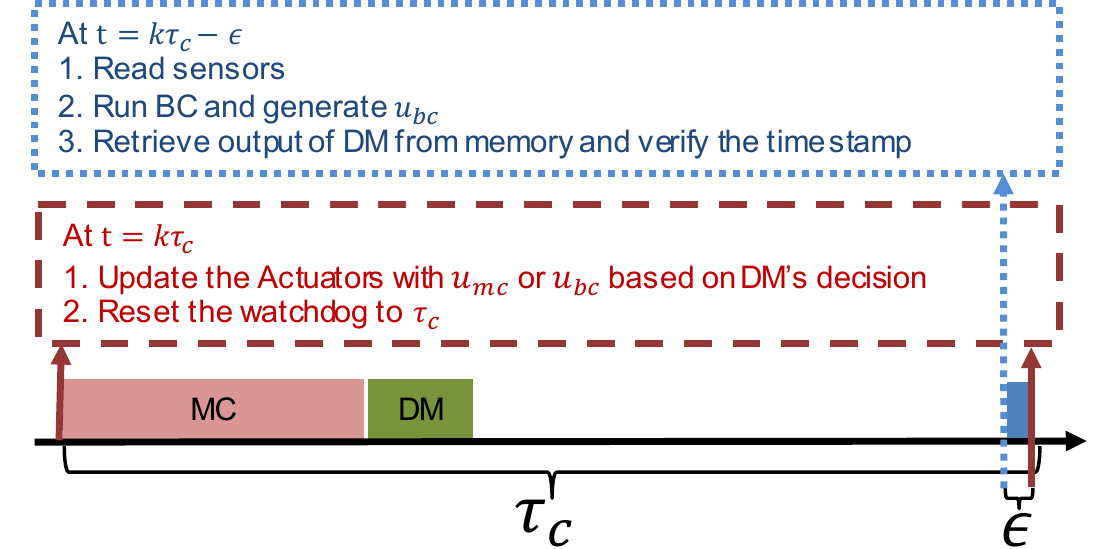}
		\caption{Sequence of events within one control cycle.}
		\label{fig:cycles}
	\end{center}
\end{figure}

In the rest of this section, we discuss the assumptions and the fault model of the system. Then, we introduce the properties of the BC and how it is able to safely tolerate the restarts. Finally, we discuss the safe switching logic of the DM.


\subsection{Assumptions and Fault Model}
\label{sec:assumptions}
In this section, we clarify several assumptions we make about faults and components of the system.

\begin{itemize}
	\setlength\itemsep{0em}
	\item 
	In this work, we are not concerned about hardware faults and we assume that hardware is reliable.
	\item BC, DM, and flushing task are independently verified and fault-free. They might, however, fail silently~(no output is generated) due to the faults in the previously dependent software layers or other applications.
	\item System-level and application-level faults may cause BC, DM, and flushing task to fail silently but may not change their logic or alter their output. 
	\item Once a command is sent to an actuator input; the actuator holds that value until the control system sends a new actuation command. Therefore, during a system-level restart, the actuators operate with the last command that was sent before the restart occurred\footnote{Commercial chips such as~\cite{nxpPCA9685} are available that provide programmable PWM controller. Using these intermediate chips one can prevent the invalid signals that may appear on the general-purpose input/output (GPIO) port of the board during a restart, from changing the actuation command.}.
	
	\item We assume that the system-level faults do not happen within the first $\uptau_r$ seconds after the boot is complete so that the BC has the chance to execute correctly at least once. In other words, this assumption implies that the system is not completely dysfunctional. In Section~\ref{sec:properties}, we demonstrate the necessity of this assumption.
	
	
\end{itemize}


\subsection{Properties of the Base Controller}
\label{sec:properties}

In this section, we provide the properties required for the BC as follow: \\
There exists a subset $\mathcal{I}$ of the state space, such that  for all $x \in \mathcal{I}$ at time $t_0\in\R^+_0$, there exists a control
command $u_{bc}\in \mathcal{S}_u$, such that:
\begin{itemize}
	\item[(i)] $\xi_{x,u_{bc}}(t_0 + \uptau_c ) \in \mathcal{I}$,
	\item[(ii)] $\xi_{x,u_{bc}}(t_0 + \uptau_c + \uptau_r) \in \mathcal{I}$, and 
	\item[(iii)] $\xi_{x,u_{bc}}(t) \in \mathcal{S}$ for $t\in [t_0,t_0+\uptau_c + \uptau_r]$. 
\end{itemize}
Note that, in the rest of the paper we assumed that the actuators hold the control input constant within the period of $[t_0, t_0+\uptau_c+\uptau_r]$.

Intuitively, above properties imply that if the current state of the system is inside $\mathcal{I}$, BC is able to generate a control command that keeps the physical system safe. For the intuition, consider $t_0 = k\uptau_c$. Property (i) implies that one control cycle after $u_{bc}$ is applied to the actuators, at the end of ($k+1$)th cycle, state is inside $\mathcal{I}$. Therefore, if the system is still running and no faults have occurred, BC is able to find another safe command at $t = (k+1)\uptau_c$. If a fault had occurred within the ($k+1$)th cycle, a restart will be triggered at the end of the cycle and BC will not be available to update the actuator input. Property (ii) implies that in such a case, the system will be in $\mathcal{I}$, after the restart completes. This guarantees that the system can be kept safe after the restart completes. Finally, property (iii) ensures that the system remains inside the safety region during ($k+1$)th cycle and a possible consequent restart.

A BC with the above properties, without any other components, can keep the system safe, only if it updates the actuator commands at least once after every restart $\uptau_r$. Therefore, it is necessary for the system to not have any system-level faults within the first $\uptau_r$ seconds after the restart.

\subsection{Switching Logic of DM}
\label{sec:runtime}

A system with only BC remains safe and tolerates restarts but it does not make any progress towards the mission goal. In order to maximize the progress towards the mission goal, it is desirable to use the MC command in every cycle whenever it is possible.




In every cycle $k$, DM runs and evaluates the following conditions. If those  conditions hold, $u_{mc}$ is safe to be applied to the actuator inputs at the end of the cycle~(\ie at time $t = k\uptau_c$). Otherwise, DM chooses $u_{bc}$. Following conditions guarantee that the system remains safe and recoverable under $u_{mc}$ whether it restarts or not. 

\begin{itemize}
	\item[(i)]$\mathbf{Reach}_{\uptau_c}(\bar{x}[k], u_{mc}) \subseteq \mathcal{I}$
	\item[(ii)]$\mathbf{Reach}_{\uptau_r+\uptau_c}(\bar{x}[k], u_{mc}) \subseteq \mathcal{I}$
	\item[(iii)]$\mathbf{Reach}_{[0, \uptau_r+\uptau_c]}(\bar{x}[k], u_{mc}) \subseteq \mathcal{S}$
\end{itemize}

Here, $\uptau_r$ and $\uptau_c$ are the length of the restart time and of the control cycle of the platform. Notation $\bar{x}[k]$ denotes the state of the system when the actuator command is going to be applied to the system~(\ie the end of the cycle at time $t=k\uptau_c$).

From properties of the BC, it is known that if the state is inside $\mathcal{I}$, BC can find a control command that keeps the system in safe and restartable region. Condition~(i) ensures that one control cycle after $u_{mc}$ is applied to the system the state will be inside $\mathcal{I}$. If no faults occur within the control cycle, BC is guaranteed to be able to find a safe control for the system. However, if a fault occurs within the cycle, WD triggers a system restart at the end of the cycle. Condition~(ii) ensures that state will be inside $\mathcal{I}$ when the restart completes~(\ie at $\uptau_c+\uptau_r$). Furthermore, condition~(iii) guarantees that during the control cycle and restart time (if it happens) state remains inside the safety region.

Note that, in the real implementation, calculating reachable set and therefore evaluating these conditions requires time and does not happen instantaneously. Therefore, assuming $k$ is the current cycle, above conditions have to be assessed before $t=k\uptau_c$. At this time, however, $x[k]=x(k\uptau_c)$~(state of the system when the actuator command is going to be updated) is not available yet. To address this issue, above conditions use $\bar{x}[k]$ which is the over-approximated prediction of $x[k]$ based on $x[k-1]$~(sampled sensor values in the previous cycle). Prediction $\bar{x}[k]$ can be computed in the following way:
\[
\bar{x}[k] = \overline{\mathbf{Reach}}_{\uptau_c}(x[k-1], u_{k-1}),
\]
where $x[k-1]$ is the sampled state at the previous cycle~(state of the system at the beginning of the current control cycle). Input $u_{k-1}$ is the control command sent to the actuators in the previous cycle. Since, in the first control cycle after a restart, $u_{\text{k-1}}$ is not available, the DM always chooses the BC in the first cycle. To compute an over-approximation of reachable set for nonlinear control systems there are various approaches available in literature for example see \cite[Section VIII.c]{ReissigWeberRungger17}, \cite{6632887}, and \cite{asarin2003reachability}.


\section{Base Controller Design}
\label{BC}
In this section, we provide a systematic approach to design base controllers ensuring properties mentioned in Subsection \ref{sec:properties}. To design BC, we use symbolic controller synthesis approach which uses the discrete abstractions of nonlinear physical systems \cite{tabuada2009verification}. The advantage of using this approach is that it provides formally verified controllers for high-level specifications (usually expressed as linear temporal logic (LTL) formulae \cite{baier2008principles}). One can readily see that the properties given in Subsection \ref{sec:properties} are equivalent to invariance specification.
\subsection{Transition Systems and Equivalence Relation}
We recall the notion of \textit{transition system} introduced in \cite{tabuada2009verification} which will later be used as unified framework to represent nonlinear control systems and corresponding discrete abstractions.
\begin{definition}[Transition system]
	A transition system is a tuple $S=(X,X_0,U,\longrightarrow)$ where $X$ is a set of states, $X_0\subseteq X$ is a set of initial states, $U$ is a set of inputs, $\longrightarrow\subseteq X\times U \times X$ is a transition relation.
\end{definition}
We denote by $x\overset{u}{\longrightarrow} x'$ an alternative representation for transition $(x,u,x')\in\longrightarrow$, where state $x'$ is called a $u$-successor (or simply successor) of state x , for some input $u\in U$. We denote by $Post_u(x)$ the set of all $u$-successors of state $x$, and by $U(x)$ the set of all admissible inputs $u\in U$ for which $Post_u(x)$ is non-empty. Now, we provide the notion of feedback refinement relation between two transition systems, introduced in \cite{ReissigWeberRungger17}, which is later used to construct discrete abstractions and base controllers for nonlinear control systems $\Sigma$. 
\begin{definition}[Feedback refinement relation]
	Consider two transition systems  $S_1=(X_1,X_{10}$, $U_1,\underset{1}{\longrightarrow})$ and $S_2=(X_2,X_{20},U_2,\underset{2}{\longrightarrow})$ with $U_2 \subseteq U_1$. A strict relation $Q \subseteq X_1\times X_2$ is a feedback refinement relation from $S_1$ to $S_2$ if following conditions hold for every pair $(x_1,x_2) \in Q$:
	\begin{itemize}
		\item[(i)] $U_2(x_2) \subseteq U_1(x_1)$,
		\item[(ii)] $u\in U_2(x_2) \Rightarrow Q(Post_u(x_1))\subseteq Post_u(x_2)$,
	\end{itemize}
	and the feedback refinement relation from $S_1$ to $S_2$ is denoted by $S_1\preceq_Q S_2$.
\end{definition}
Intuitively, the above relation says that all admissible inputs of $S_2$ can be used in transition system $S_1$ such that all transitions in $S_1$ are associated with corresponding transitions in $S_2$. As a result, one can easily refine controller synthesized for $S_2$ using feedback refinement relation $Q$ to make it compatible for $S_1$. Further details about feedback refinement relation and its role in the controller synthesis can be found in \cite{ReissigWeberRungger17}.  
\subsection{Sampled-Data Control System as a Transition System} 
As we discussed in the previous sections, the sampling time can take any value in $h=\{\uptau_c,\uptau_r+\uptau_c\}$ depending on the occurrence of fault. We assume that the value of control input is held for the respective sampling period. The transition system associated with the nonlinear control system $\Sigma$ with such a sampling behavior can be given by the tuple
\begin{align}
S_h(\Sigma)=(X_h,X_{h 0},U_h,\underset{h}{\longrightarrow}),
\end{align} 
where 
\begin{itemize}
	\item $X_h=\R^n$, $X_{h 0}=\R^n$, $U_h=\mathsf{U}$, and 
	\item $x\overset{u}{\underset{h}{\longrightarrow}} x'$ is a transition if and only if there exists $x'= \xi_{x,u}(\uptau_c)$ or $x'= \xi_{x,u}(\uptau_r+\uptau_c)$, where $u\in U_h$.
\end{itemize}
Note that we abuse notation above by identifying $u$ with the constant input curve with domain $[0,\uptau_c]$ or $[0,\uptau_r+\uptau_c]$ and value $u$.

For the transition system $S_h(\Sigma)$, the finite or infinite run generated from initial state $x_0\in X_{h 0}$ is given by $x_0\overset{u_0}{\underset{h}{\longrightarrow}}x_1\overset{u_1}{\underset{h}{\longrightarrow}}x_2\overset{u_2}{\underset{h}{\longrightarrow}}\ldots$ such that $x_i \overset{u_i}{\underset{h}{\longrightarrow}} x_{i+1}'$, for $i\in \N_0$.

By considering properties of BC mentioned in Subsection \ref{sec:properties}, one can view it as a safety controller synthesis problem for $S_h(\Sigma)$. 
\begin{definition}[Safety controller]\label{def:safe_controller} Consider a safe set $\mathcal{S}\subseteq \R^n$ as given in Subsection \ref{sec:safetydefinition}, a \textit{safety controller} for $S_h(\Sigma)$ is given by a map $C_{h}:X_h\rightarrow 2^{U_h}$ such that:
	\begin{itemize}
		\item[(i)] for all $x\in X_h$,  $C_{h}(x)\subseteq U_h(x)$,
		\item[(ii)] its domain $dom(C_{h})=\{x \in X_h\mid C_{h}\ne \emptyset\}\subseteq\mathcal{S}$,
		\item[(iii)] for all $x\in dom(C_{h})$ and $u\in C_{h}(x)$, $Post_u(x)\subseteq dom(C_{h})$.
	\end{itemize}
\end{definition}
Essentially, a safety controller generates infinite runs $x_0\overset{u_0}{\underset{h}{\longrightarrow}}x_1\overset{u_1}{\underset{h}{\longrightarrow}}x_2\overset{u_2}{\underset{h}{\longrightarrow}}\ldots$  such that $x_i\in\mathcal{S}$, for all $i\in \N_0$. At the end of this section, we provide a systematic way to compute such controller for linear control systems. However, finding such a control strategy for complex nonlinear control systems is quiet difficult. This motivates the use of abstraction-based synthesis methods described below.\\
\subsection{Discrete Abstraction}\label{Disc_abs}
To design controllers for the concrete system $S_h(\Sigma)$ from its abstraction, the system and its abstraction must satisfy formal behavioural inclusions in terms of feedback refinement relations. Consider sampling times $\uptau_c,\uptau_r+\uptau_c\in\R^+$ and quantization parameter $\eta\in(\R^+)^n$. The \textit{discrete abstraction} of $S_h(\Sigma)$ is given by the tuple 
\begin{align} 
S_q(\Sigma)=(X_q,X_{q 0},U_q,\underset{q}{\longrightarrow}),
\end{align}
where
\begin{itemize}
	\item $X_q$ is a cover of $X_h$ and elements of the cover $X_q$ are nonempty, closed hyper-intervals referred to as cells. For computation of the abstraction, we consider subset $\overline{X}_q  \subseteq X_q$ of congruent hyper-rectangles aligned on a uniform grid parameterized with quantization parameter $\eta\in (\R^+)^n$ and given by
	$\eta\mathbb{Z}^n=\{c\in\R^n\mid\exists_{k\in\mathbb{Z}^n}\forall_{i\in\{1,2,\ldots,n\}}c_i=k_i\eta_i\}$, \ie $x_q\in\overline{X}_q$ implies that there exists $c\in\eta\mathbb{Z}^n$ with $x_q=c+\segcc{\frac{\eta}{2},\frac{\eta}{2}}$. The remaining cells $X_q\setminus\overline{X}_q$ are  considered as "overflow" symbols, see \cite[Sec III.A]{5770194}
	\item $X_{q0}\subseteq X_q$, 
	\item $U_q$ is a finite subset of $U_h$, 
	\item for $x_q\in\overline{X}_q$ and $u\in U_q$, define $A:=\{x_q'\in X_q \mid (x_q'\cap\overline{\mathbf{Reach}}_{\uptau_c}(x_q,u_q))\cup(x_q'\cap\overline{\mathbf{Reach}}_{\uptau_r+\uptau_c}(x_q,u_q))\ne\emptyset\}$. If $A\subseteq \overline X_q$, then $Post_u(x_q)=A$, and otherwise $Post_u(x_q)=\emptyset$. Moreover, $Post_u(x_q)=\emptyset$ for all $x_q\in X_q\setminus\overline{X}_q$.
\end{itemize}
For the exact procedure to compute such discrete abstraction, we refer interested readers to \cite{rungger2016scots}. 
\begin{theorem}
	If $S_q(\Sigma)$ is a discrete abstraction of $S_h(\Sigma)$ with sampling times $\uptau_c, \uptau_r+\uptau_c\in\R^+$, and quantization  parameter $\eta\in(\R^+)^n$, then $S_h(\Sigma)\preceq_Q S_q(\Sigma)$.
\end{theorem}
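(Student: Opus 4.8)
The plan is to exhibit the natural quantization relation and verify directly that it satisfies the two conditions in the definition of feedback refinement relation. I would define $Q\subseteq X_h\times X_q$ by declaring $(x,x_q)\in Q$ if and only if $x\in x_q$, \ie the continuous state lies in the cell $x_q$. First I would check that $Q$ is strict: since $X_q$ is a cover of $X_h=\R^n$, every $x\in\R^n$ belongs to at least one cell, so $Q(x)\ne\emptyset$. The global hypothesis $U_q\subseteq U_h$ holds by construction, as $U_q$ is a finite subset of $U_h=\mathsf{U}$.

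Next I would dispose of condition (i). In the concrete system $S_h(\Sigma)$ the transition relation is total: for every $x\in\R^n$ and every $u\in\mathsf{U}$ the solutions $\xi_{x,u}(\uptau_c)$ and $\xi_{x,u}(\uptau_r+\uptau_c)$ exist and are unique by the locally Lipschitz assumption on $f$, so $Post_u(x)\ne\emptyset$ and hence $U_h(x)=\mathsf{U}$ for every $x$. Therefore, for any $(x,x_q)\in Q$ we get $U_q(x_q)\subseteq U_q\subseteq\mathsf{U}=U_h(x)$, and condition (i) holds trivially.

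The heart of the argument is condition (ii), and here I would split on whether $x_q$ is a grid cell. If $x_q\in X_q\setminus\overline{X}_q$ is an overflow symbol, then $Post_u(x_q)=\emptyset$ for all $u$, so $U_q(x_q)=\emptyset$ and (ii) is vacuous. Otherwise $x_q\in\overline{X}_q$; fixing $u\in U_q(x_q)$ forces $Post_u(x_q)=A\subseteq\overline{X}_q$, with $A$ the target set from the abstraction's definition. Take any $(x,x_q)\in Q$ and any $y\in Post_u(x)$, so that $y=\xi_{x,u}(\uptau_c)$ or $y=\xi_{x,u}(\uptau_r+\uptau_c)$. The key observation is that, because $x\in x_q$, the successor lies in the true reachable set from the cell, hence in its over-approximation: $\xi_{x,u}(\uptau_c)\in\mathbf{Reach}_{\uptau_c}(x_q,u)\subseteq\overline{\mathbf{Reach}}_{\uptau_c}(x_q,u)$, and analogously at the sampling time $\uptau_r+\uptau_c$. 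Now for any cell $x_q'\in Q(y)$ we have $y\in x_q'$, so $x_q'$ meets the corresponding over-approximation and therefore belongs to $A=Post_u(x_q)$. This yields $Q(Post_u(x))\subseteq Post_u(x_q)$, establishing (ii) and hence $S_h(\Sigma)\preceq_Q S_q(\Sigma)$.

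I expect the main obstacle to be the bookkeeping around the disjunctive two-sampling-time transition structure together with the overflow symbols, rather than any deep estimate: one must apply the containment $\mathbf{Reach}\subseteq\overline{\mathbf{Reach}}$ separately to each of the two sampling times, and argue that whenever $u\in U_q(x_q)$ the entire target set $A$ already lies in $\overline{X}_q$, so that no concrete successor can escape into an overflow cell undetected. Once the over-approximation property is invoked, the continuous-state part of the argument is immediate.
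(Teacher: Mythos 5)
Your proposal is correct and is in substance the same argument the paper defers to: the paper's proof is just a citation to \cite[Theorem~VIII.4]{ReissigWeberRungger17}, and what you wrote is exactly that standard verification (natural quantizer relation, strictness from the cover, totality of the concrete transition relation for condition~(i), and the containment $\mathbf{Reach}\subseteq\overline{\mathbf{Reach}}$ cell-by-cell for condition~(ii)), correctly adapted to the disjunctive two-sampling-time transition relation and the overflow symbols. The only implicit point, which you inherit from the paper itself, is that totality of $\longrightarrow_h$ (hence $U_h(x)=\mathsf{U}$) requires solutions to exist up to time $\uptau_r+\uptau_c$, i.e.\ forward completeness, which local Lipschitz continuity alone does not guarantee but which the paper assumes when asserting existence of $\xi$ on $\R^+_0$.
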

\begin{proof}
	The proof is similar to the proof of \cite[Theorem VIII.4]{ReissigWeberRungger17}.
\end{proof}
The abstract safe set $\hat{\mathcal{S}}$ for $S_q(\Sigma)$ is given by $\hat{\mathcal{S}}:=\{x_q\in \overline{X}_q\mid Q^{-1}(x_q)\subseteq\mathcal{S}\}$.
\subsection{Controller Synthesis and Refinement}
In this section, we consider the problem of synthesis of safety controller $C_h$ for $S_h(\Sigma)$ and safe set $\mathcal{S}$. Because of the feedback refinement relation, we can solve safety controller synthesis problem for the discrete abstraction $S_q(\Sigma)$ and abstract safe set $\hat{\mathcal{S}}$. Let $C_q:X_q\rightarrow U_q$ be the maximal safety controller satisfying conditions in Definition \ref{def:safe_controller} for $S_q(\Sigma)$ and safe set $\hat{\mathcal{S}}$. Since $S_q(\Sigma)$ has finite states and inputs, we can use standard maximal fixed-point computation algorithm \cite{tabuada2009verification} for the computation of $C_q$. One can easily refine this controller for $S_h(\Sigma)$ and safe set $\mathcal{S}$ using the following theorem:
\begin{theorem}
	If $S_h(\Sigma)\preceq S_q(\Sigma)$ and $C_q$ is the safety controller for $S_q(\Sigma)$ and $\hat{\mathcal{S}}$, then the refined controller $C_h:=C_q\circ Q$ solves the safety problem for $S_h(\Sigma)$ and $\mathcal{S}$.
\end{theorem}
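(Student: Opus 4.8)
The plan is to verify directly that $C_h := C_q \circ Q$ meets the three requirements of Definition~\ref{def:safe_controller} for the concrete system $S_h(\Sigma)$ and the safe set $\mathcal{S}$, drawing on three ingredients: the two defining conditions of the feedback refinement relation $S_h(\Sigma)\preceq_Q S_q(\Sigma)$ (available from the discrete-abstraction theorem), the three properties that $C_q$ enjoys as a safety controller for $S_q(\Sigma)$ and $\hat{\mathcal{S}}$, and the definition $\hat{\mathcal{S}}=\{x_q\in\overline X_q\mid Q^{-1}(x_q)\subseteq\mathcal{S}\}$. Since $Q$ is set-valued (a boundary state may belong to several cells), I read the composition as $C_h(x)=\bigcup_{x_q\in Q(x)}C_q(x_q)$, and throughout I will use that $Q$ is strict, so $Q(x)\neq\emptyset$ for every $x\in X_h$.

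First I would dispatch the two pointwise conditions. For admissibility (condition (i)), take $u\in C_h(x)$; then $u\in C_q(x_q)$ for some $x_q\in Q(x)$, so $u\in U_q(x_q)$ by property (i) of $C_q$, and condition (i) of the feedback refinement relation gives $U_q(x_q)\subseteq U_h(x)$, whence $u\in U_h(x)$. For the domain inclusion (condition (ii)), suppose $x\in dom(C_h)$; then $C_q(x_q)\neq\emptyset$ for some $x_q\in Q(x)$, so $x_q\in dom(C_q)\subseteq\hat{\mathcal{S}}$ by property (ii) of $C_q$. The membership $x_q\in\hat{\mathcal{S}}$ unfolds to $Q^{-1}(x_q)\subseteq\mathcal{S}$, and since $x_q\in Q(x)$ means $x\in Q^{-1}(x_q)$, we conclude $x\in\mathcal{S}$, giving $dom(C_h)\subseteq\mathcal{S}$.

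The substantive step is the invariance condition (iii), where I must show that every successor of a controlled state is again controlled. Fix $x\in dom(C_h)$, $u\in C_h(x)$, and $x'\in Post_u(x)$ computed in $S_h(\Sigma)$; choose $x_q\in Q(x)$ with $u\in C_q(x_q)$. Because $u\in C_q(x_q)\subseteq U_q(x_q)$, condition (ii) of the feedback refinement relation applies and yields $Q(Post_u(x))\subseteq Post_u(x_q)$, so every cell $x_q'\in Q(x')$ lies in $Post_u(x_q)$. Meanwhile $x_q\in dom(C_q)$ together with property (iii) of $C_q$ gives $Post_u(x_q)\subseteq dom(C_q)$. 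Hence every $x_q'\in Q(x')$ satisfies $C_q(x_q')\neq\emptyset$, and since $Q(x')\neq\emptyset$ by strictness, $C_h(x')=\bigcup_{x_q'\in Q(x')}C_q(x_q')\neq\emptyset$, i.e.\ $x'\in dom(C_h)$. This closes all three conditions.

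I expect the only real subtlety — rather than a genuine obstacle — to be the bookkeeping around the set-valuedness of $Q$: one must check that the inclusion transported by condition (ii) of the feedback refinement relation covers \emph{every} cell containing $x'$, not merely one, so that the union defining $C_h(x')$ is guaranteed nonempty. This is precisely where strictness of $Q$ and the universally quantified inclusion $Q(Post_u(x))\subseteq Post_u(x_q)$ are indispensable. The argument is, in essence, the safety instance of the controller-refinement result of \cite{ReissigWeberRungger17}, specialised to the invariance specification encoded by the pair $\mathcal{S}$, $\hat{\mathcal{S}}$.
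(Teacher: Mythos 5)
Your proof is correct, and it supplies what the paper leaves implicit: the paper's own ``proof'' is a one-line deferral to Theorem VI.3 of \cite{ReissigWeberRungger17}, a general controller-refinement theorem formulated behaviorally for arbitrary specifications, whereas you give a self-contained verification, specialized to invariance, that $C_h=C_q\circ Q$ satisfies the three clauses of Definition~\ref{def:safe_controller} for $S_h(\Sigma)$ and $\mathcal{S}$. Your reading of the composition as $C_h(x)=\bigcup_{x_q\in Q(x)}C_q(x_q)$ is consistent with the paper's conventions (relations are identified with maps into powersets, and $Q\circ R(x)=Q(R(x))$), and you correctly isolate the two places where the hypotheses do real work: condition (ii) of the feedback refinement relation transports $Q(Post_u(x))\subseteq Post_u(x_q)$, which covers \emph{every} cell of $Q(x')$ and not just a selected one --- exactly what the union interpretation requires --- while strictness of $Q$ guarantees $Q(x')\neq\emptyset$, so nonemptiness of $C_h(x')$ follows; the definition $\hat{\mathcal{S}}=\{x_q\in\overline{X}_q\mid Q^{-1}(x_q)\subseteq\mathcal{S}\}$ then yields $dom(C_h)\subseteq\mathcal{S}$ via $x\in Q^{-1}(x_q)$. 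What the citation buys the paper is generality (the cited result handles arbitrary specifications and composed closed loops, not only safety); what your direct check buys is transparency and verifiability within the paper's own definitions. One caveat worth stating if you wrote this up: like the theorem itself, your argument certifies safety only of the sampled transitions of $S_h(\Sigma)$; inter-sample safety is handled separately in the paper by shrinking the safe set, as noted in the remark immediately following the theorem.
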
 
\begin{proof}
	The proof is similar to the proof of \cite[Theorem VI.3]{ReissigWeberRungger17}.
\end{proof}
Intuitively, the refined controller $C_h$ for $S_h$ can naturally be obtained from the abstract controller $C_q$ by using the feedback refinement relation $Q$ as a \textit{quantizer} to map $x_h$ to $x_q\in Q(x_h)$.   
\begin{remark}
	The obtained controller $C_h$ solves the safety problem for the sampled system, i.e., the obtained base controller satisfies the first two properties mentioned in Subsection \ref{sec:properties} with invariant set $\mathcal{I}=dom(C_h)$. However, one can ensure safety guarantee of inter-sampling trajectory (\ie third property in Subsection \ref{sec:properties}) by shrinking the safe set by a magnitude computed using the global Lipschitz continuity property of map $f$. 
\end{remark}
Despite the applicability of the proposed approach for complex and nonlinear control systems, it suffers from the curse of dimensionality, i.e., the computational complexity increases exponentially with state-space dimensions of concrete systems. There are few results available to address this issue for some class of nonlinear control systems \cite{zamani2017towards,zamani2017compositional}. In next subsection, we provide an alternative approach to compute invariant set $\mathcal{I}$ and BC for linear control systems.  

\subsection{Base Controller for Linear Control Systems}
\label{BC_linear}
In this subsection, we provide an algorithm to compute $\mathcal{I}$ using discretized linear-control systems. The continuous linear control system can be converted to a discrete control system with the sampling time of $\uptau_c$ as: 
\begin{equation}
\dot{\xi}(t) = A\xi(t) + B\upsilon(t) \rightarrow x[k+1] = A_dx[k] + B_du[k],
\label{eq:dynamicdis}
\end{equation}
where $A_d = e^{A\uptau_c} = \sum_{k=0}^{\infty}  \frac{1}{k!}(A\uptau_c)^k \simeq \sum_{k=0}^{p}  \frac{1}{k!}(A\uptau_c)^k$, and $B_d = \left(\int_{0}^{\uptau_c}e^{At}dt \right) \cdot B$.


In this subsection, we show how to construct a BC with the properties: $\forall x[k] \in \mathcal{I}, \exists u_0$, where $u[p] = u_0, p\in\{k, k+1, ..., k+m\}$ such that 
(i) $x[k+1]\in \mathcal{I}$ and (ii) $ x[k+1+m] \in \mathcal{I}$, where $m = \uptau_r/\uptau_c$ and $m\in \mathbb{N}$. However, these properties does not guarantee that the inter-sample behaviour is in the safe region. To address this issue we need to readjust the safe region. For the systematic procedure to compute readjusted safe region $\mathcal{S}' \subseteq \mathcal{S}$, we refer interested reader to \cite[Section 5.1]{7945009}

\subsubsection{Finding the Invariant Subset $\mathcal{I}$}
\label{sec:iregion}


To compute the set $\mathcal{I}$, we closely follow the usual 
construction method based on backwards reachable sets to compute the
\textit{largest} invariant set for linear discrete-time systems~(see
e.g. in~\cite{blanchini2008set}).
We slightly modify this procedure and present it in Algorithm~\ref{alg:invariant} to compute the subset $\mathcal{I} \subseteq \mathcal{S}'$, such that for the discrete-time system in (\ref{eq:dynamicdis}), $\mathcal{I}$ satisfies the properties in Subsection~\ref{sec:properties}.




\begin{algorithm}[ht]
	\SetAlgoLined\DontPrintSemicolon
	\SetKwFunction{algo}{Switching Logic}
	
	\SetKwFunction{algo}{ComputeInvRegion}
	\SetKwProg{myalg}{}{}{}
	\myalg{\algo{$H^a_x$, $h^a_x$, $H_u$, $h_u$, $A_d$, $B_d$, $A^{(m+1)}_d$, $B^{(m+1)}_d$}}
	{	
		$I^{(0)}$ :=  Polytope($H^a_x \cdot x \leq  h^a_x$) and $p$ = 0\label{line:adjust} \\
		\While{$p < p_{max}$}{
			$[H'_x, h'_x]$ := PolytopeToMatrix($I^{(p)}$)\\
			pt := Polytope
			\[
			\left(
			\left[\begin{array}{c}
			H'_x\\    
			H'_x\\
			H_u
			\end{array}
			\right]
			\left[
			\begin{array}{cc}
			A_d^{(m+1)}& B_d^{(m+1)}\\    
			A_d& B_d\\
			0_{m \times n}&I_{m} \\
			\end{array}
			\right]
			\left[
			\begin{array}{c}
			x\\
			u\\
			\end{array}
			\right]   \leq 
			\left[
			\begin{array}{c}
			h'_x\\
			h_u\\
			\end{array}
			\right] 
			\right)
			\] \label{line:stateexpansion}\\
			$I^{(p+1)}$ := pt.projectOnStateSpace()\label{line:projtoss}\\	
			\uIf{$I^{(p)} \subseteq I^{(p+1)}$}   { \label{line:proj1}
				$[H_x^{\mathcal{I}}, h_x^{\mathcal{I}}]$ := PolytopeToMatrix($I^{(p)}$) \label{line:proj2}\\
				STOP successfully.\\
			}
			\uElseIf{$I^{(p+1)}$ is empty}{
				STOP unsuccessfully.\\
			}
			\Else{
				$p:=p+1$\\
			}    
		}
		\KwRet\ $H_x^{\mathcal{I}}, h_x^{\mathcal{I}}$\\
	}
	\caption{Computing the invariant subset $\mathcal{I}$.}
	\label{alg:invariant}
\end{algorithm}

In this algorithm, matrix $H^a_x$ and vector $h^a_x$ represent the adjusted safety region $\mathcal{S}'$ as a polytope (cf. Subsection \ref{sec:safetydefinition}), and matrix $H_x^\mathcal{I}$ and vector $h_x^{\mathcal{I}}$ represent $\mathcal{I}$ as a polytope. Matrix $H_u$ and vector $h_u$ also represent a polytope for operational ranges of control inputs. $A_d^{(m+1)}$ and $B_d^{(m+1)}$ are the matrices to find the state after $m+1$ cycles \ie $x[k+m+1] = A_d^{(m+1)}x[k] + B_d^{(m+1)}u[k]$, where $m = \uptau_r/\uptau_c$. We have $A_d^{(m+1)} = (A_d)^{m+1}$ and $B_d^{(m+1)} = (A_d^{m} + A_d^{m-1}+ ... + I )B_d$.

Intuitively, this algorithm starts from $\mathcal{S}'$ as initial region (line~\ref{line:adjust}). In every iteration of this algorithm, this region is augmented in the extended state-control space $\mathbb{R}^{n+m}$~(line~\ref{line:stateexpansion}). This linear inequality is then projected back into the state space~(line~\ref{line:projtoss}). The outcome of lines~\ref{line:stateexpansion} and~\ref{line:projtoss} is to calculate $\mathcal{I}^{(p+1)}$ which is a subset of $\mathcal{I}^{(p)}$ where a control value in $\mathcal{S}_u$ exists such that, the state in one cycle and $m+1$ cycle after are inside $\mathcal{I}^{(p)}$.

The algorithm proceeds until either $\mathcal{I}^{(p)} \subseteq \mathcal{I}^{(p+1)}$ or $\mathcal{I}^{(p+1)} = \emptyset$. In the former case, procedure successfully ends~(lines~\ref{line:proj1} to~\ref{line:proj2}). 
The latter case indicates that the dynamics of the system does not allow such a region, for the given restart time. There are cases in which the procedure does not terminate in a finite number of steps unless a finite $p_{max}$ is fixed. This may happen if $\mathcal{I}^{(\infty)}$ has an empty interior, but it is not empty~\cite{blanchini2008set}.

If matrix $A_d$ and $B_d$ are controllable, we can use ideas from~\cite{rungger2016computing} to ensure convergence. However, in general, we cannot guarantee that the procedure in Algorithm~\ref{alg:invariant} will converge to a non-empty $\mathcal{I}$. In such cases, one may have to loosen the safety constraints of the system~(\ie $\mathcal{S}$) or may have to switch to a hardware platform with a shorter restart time, to be able to apply this approach.

\subsubsection{Base Controller in Runtime}
\label{sec:bcruntime}
Using invariant set $\mathcal{I}$ computed as given in Subsection~\ref{sec:iregion}, one can compute a control input $u[k]$ at $k$th sampling instance that satisfies the following conditions: 

\begin{align}
\begin{array}{rl}
H_u \cdot u[k]  & \leq h_u, \\
H_x^\mathcal{I}  \cdot x[k+1] & \leq h_x^\mathcal{I},\\ 
H_x^\mathcal{I}  \cdot x[k+m+1] & \leq h_x^\mathcal{I}.\\
\end{array}
\end{align}
By substituting $x[k+1]$ and $x[k+m+1]$, we get the following linear matrix inequalities: 
\begin{equation}
\begin{array}{rl}
H_u u[k] & \leq h_u,\\
H_x^\mathcal{I} B_d u[k] & \leq h_x^\mathcal{I} - H_x^\mathcal{I}  A_d  x[k],\\
H_x^\mathcal{I} B_d^{(m+1)} u[k] & \leq h_x^\mathcal{I} - H_x^\mathcal{I} A_d^{(m+1)}x[k].\\
\end{array}
\label{eq:scruntime}
\end{equation}
At runtime, BC receives the sensors values \ie $x[k]$, and calculates $u[k]$ by solving these inequalities.

\section{Case study and Evaluation}
\label{sec:evaluation}


To demonstrate the practicality of the proposed approach, we implemented a controller for two benchmark systems: (i) inverted pendulum system and (ii) 3-DOF helicopter\cite{3dofhelicopter} and empirically verify fault-tolerance guarantees. We utilize one COTS platform to implement our controller. We inject faults in the control logic, control application, and the operating system to demonstrate that the system remains safe, despite the faults, and recovers. 

\subsection{Experimental Setup}

For the prototype of the proposed design, an i.MX7D application processor is used. This SoC provides two general purpose ARM Cortex-A7 cores capable of running at the maximum frequency of 1 GHz and one real-time ARM Cortex-M4 core that runs at the maximum frequency of 200MHz. The real-time core runs from tightly coupled memory to ensure predictable behavior required for the real-time applications/tasks. The real-time core of the considered platform runs FreeRTOS~\cite{freertos}, an operating system for real-time applications. Because our control tasks have real-time constraints, we implement our controller on the real-time core. Ideally, the general purpose cores would have been completely disabled for the experiments. However, in i.MX7D platform, only Cortex-A7 cores have direct access to the flash memory and, only these two cores can load the binary images of the real-time core from flash into the real-time core's memory after each restart. Hence, instead of permanently disabling those cores, they are only disabled after the software of the real-time core is loaded from flash into the memory. Note that, this mechanism is specific to this particular platform and does not impact the generality of our proposed technique.




The manufacturer's boot procedure of the board is designed to boot the general purpose cores and the real-time core at the same time. It includes extra initialization procedures that are necessary only for running the general purpose core's kernel and mounting its file system. It loads the real-time core code only after those procedures are completed.

To reduce the boot time of the real-time core, we made two modifications to the bootloader~(u-boot) source code which can be found in~\cite{githubrepo}. (i) We included the binary of the real-time core executables~(FreeRTOS, MC, BC, DM, and flushing task) as a static array in the u-boot source code and made it part of the u-boot binary after compilation. (ii) In our modified boot process, at the boot time, the general purpose processor copies u-boot binary~(that includes the FreeRTOS and application binaries) from the SD-card into the RAM. After successful initialization of only the necessary peripherals and configuring the clock by the u-boot procedures, u-boot loads the binaries of the real-time core in its tightly coupled memory and releases it from reset. These modifications reduce the real-time core's boot time from seconds to less than 250ms\footnote{BC task activates one of the GPIO pins immediately after it executes. The restart time is measured externally using the signal on this pin. After multiple experiments, a conservative upper bound was picked for the restart time.}.

\subsection{Example 1: Inverted Pendulum}
\begin{figure}[t]
	\begin{center}
		\includegraphics[width=0.55\textwidth]{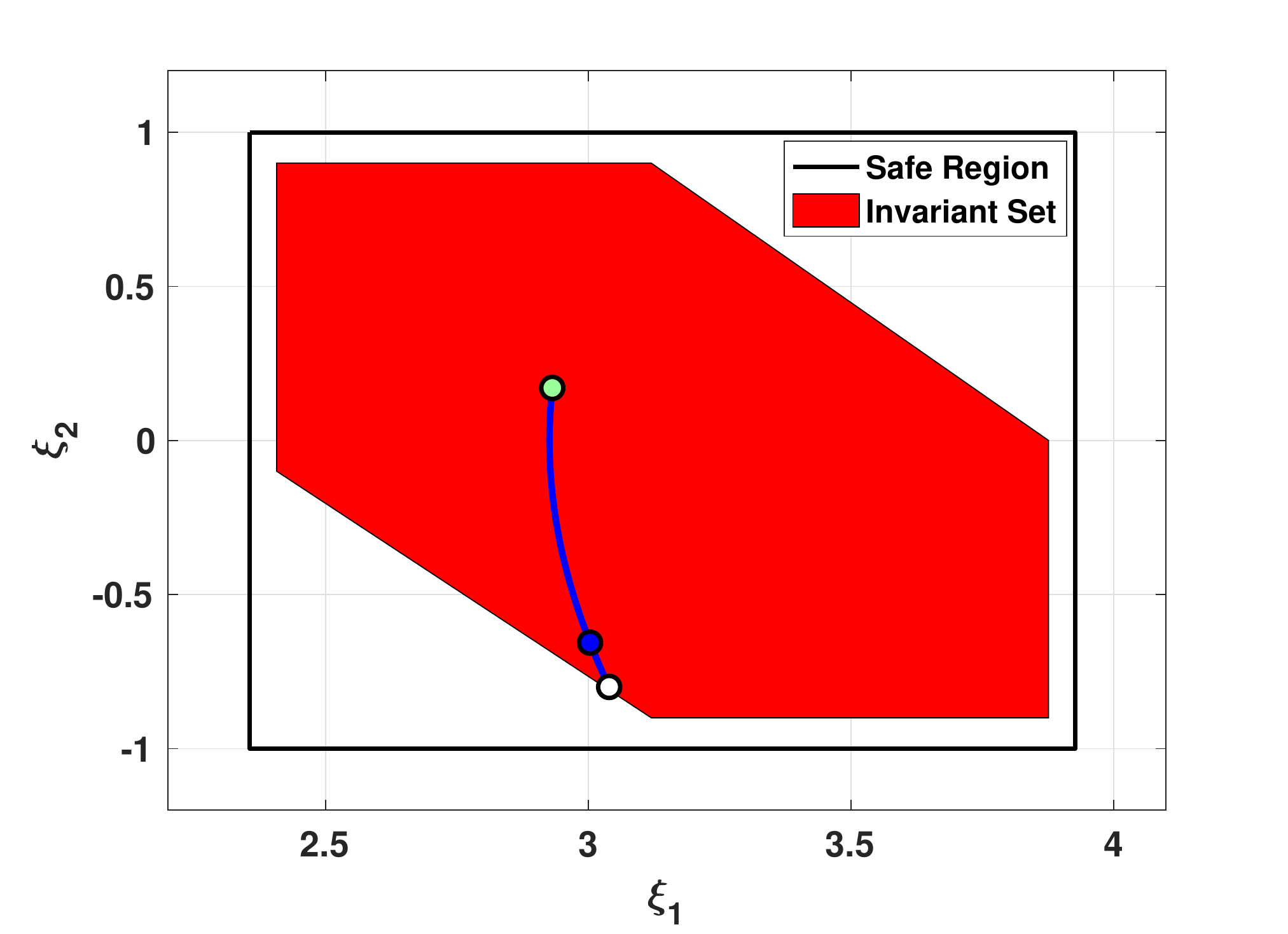}
		\caption{Invariant set obtained using abstraction based approach and a simulated closed-loop trajectory of the system under $u = 3$ which is inside $\mathcal{I}$~(red region) at times $\uptau_c = $~50~ms~(blue mark) and $\uptau_c + \uptau_r = $~300~ms~(green mark). White circle marks the beginning of the trajectory $\xi=[3.04, -0.8]^T$.}
		\label{fig:BC_abstraction}
	\end{center}
\end{figure}
For the first case study, we consider a nonlinear inverted pendulum system \cite{5770194} given by nonlinear differential equations as:
\begin{align}
\dot{\xi}_1(t)&=\xi_2(t)\nonumber\\
\dot{\xi}_2(t)&=-\omega^2\big(\sin(\xi_1(t))+\cos(\xi_1(t))\upsilon(t)\big)-2\gamma \xi_1(t),
\label{pendulum}
\end{align}
with parameters $\omega=1$ and $\gamma=0.0125$. The states $\xi_1$ and $\xi_2$ are the angular position with respect to a downward vertical axis and the angular velocity of the pendulum, respectively. The control input $\upsilon(t)$ is restricted to $[-4, 4]$. We design MC as $\upsilon_{mc}=2(\pi-\xi_1-\xi_2)$ to stabilize the pendulum at upright position that is $\xi=[\pi,0]^T$ which runs with frequency of 20Hz (i.e. $\uptau_c=$50ms) on real-time core of i.MX7D. To ensure safety of the system (i.e. to avoid pendulum to fall down), we consider safety region for the states given by a polytope parameterized by\\
\begin{align*}
H_x=\begin{bmatrix}
-1&0\\
1& 0\\
0&-1\\
0& 1
\end{bmatrix}
\text{ and }
h_x=\begin{bmatrix}
-0.75\pi\\ 1.25\pi\\ 1\\ 1
\end{bmatrix}.
\end{align*}
To ensure fault-tolerance and safety during restart, we designed BC using abstraction-based approach as discussed in Section \ref{BC}. To synthesize BC, we first constructed a discrete abstraction of the pendulum system in (\ref{pendulum}) using quantization parameter $\eta=[0.05,0.1]^T$, sampling time $\uptau_c=0.050$, and restart time $\uptau_r=0.250$.
Further, we synthesize a safety controller using maximal fixed point computation algorithm. For the controller synthesis, we used toolbox SCOTS \cite{rungger2016scots} with some modifications to adapt the construction of abstraction given in Subsection \ref{Disc_abs} . The invariant states computed using the proposed approach is shown in Figure \ref{fig:BC_abstraction}. To verify the efficacy of the designed controller, we implemented it on our experimental setup (i.MX7D) and tested in the closed-loop with inverted pendulum dynamics simulated in the computer under various test scenarios discussed in Subsection \ref{faults}.   

\subsection{Example 2: 3-DOF Helicopter}
3-DOF helicopter~(displayed in Figure~\ref{fig:3dof}) is a simplified helicopter model, ideally suited to test intermediate to advanced control concepts and theories relevant to real-world applications of flight dynamics and control in the tandem rotor helicopters, or any device with similar dynamics \cite{3dofhelicopter}. It is equipped with two motors that can generate force in the upward and downward direction, according to the given actuation voltage. 
It also has three sensors to measure elevation, pitch, and travel angle as shown in Figure~\ref{fig:3dof}.
We use the linear model of this system obtained from the manufacturer manual \cite{3dofhelicopter}. The BC is designed as discussed in Subsection \ref{BC_linear}. 
\begin{figure}[ht]
	\begin{center}
		\includegraphics[width=0.35\textwidth]{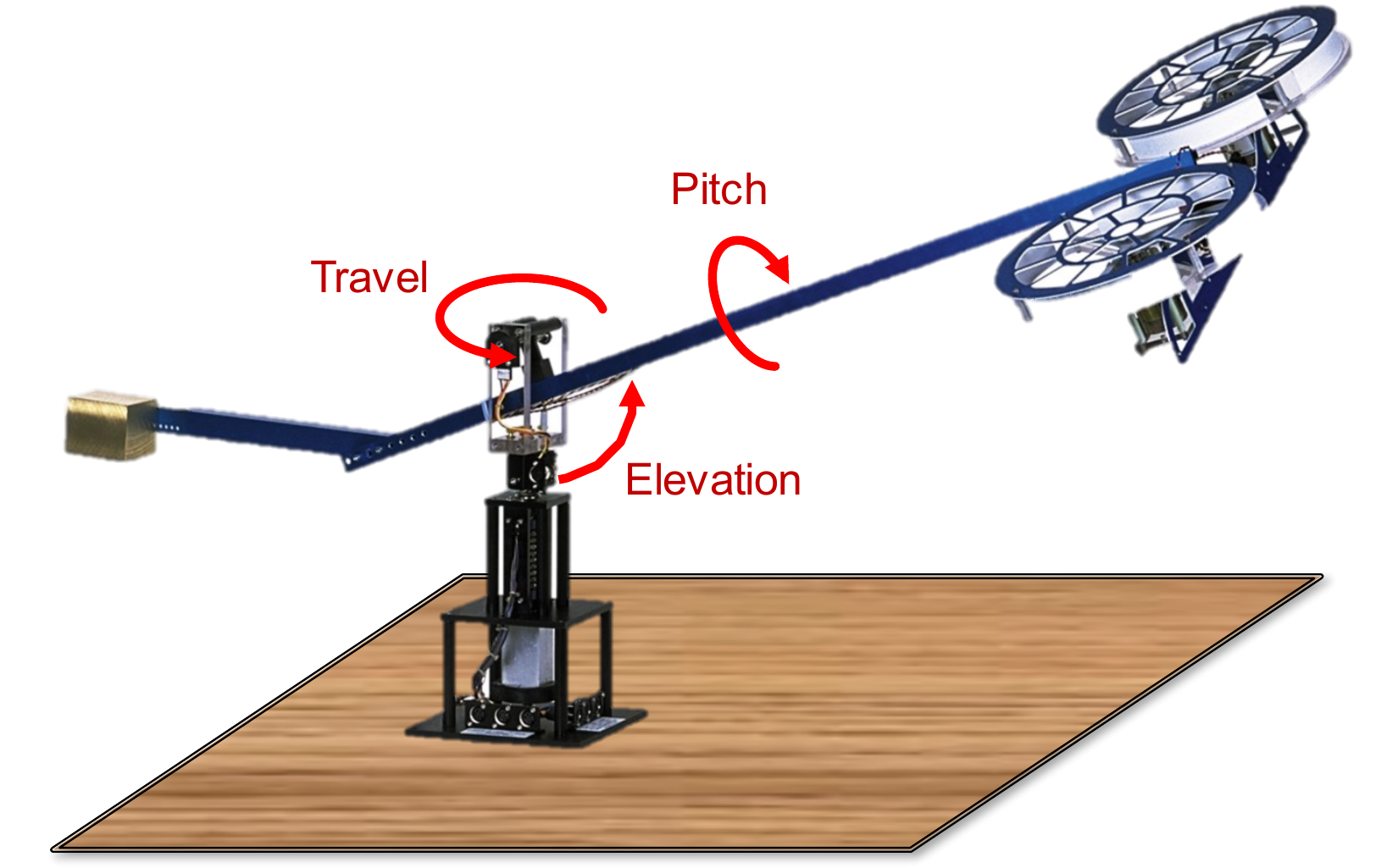}
		\caption{3 Degree of freedom~(3-DOF) helicopter.}
		\label{fig:3dof}
	\end{center}
\end{figure}

For 3-DOF helicopter, the safety region is defined in such a way that the helicopter fans do not hit the surface underneath, as shown in Figure~\ref{fig:3dof}, while respecting the maximum angular velocities. The six dimensional state vector is given by $x=[\epsilon,\rho,\lambda,\dot{\epsilon},\dot{\rho},\dot{\lambda}]^T$,  where variables $\epsilon$, $\rho$, and $\lambda$ are the elevation, pitch, and travel angles, respectively, $\dot{\epsilon}$, $\dot{\rho}$,and $\dot{\lambda}$ are the corresponding angular velocities. The $u=[v_l,v_r]^T$ represents input vector, where $v_l$ and $v_r$ are the voltages applied to right and left motors. The safe region for the state and input spaces are represented using polytopes as discussed in Subsection \ref{sec:safetydefinition} and parametrized with
\begin{align*}
H_x=\begin{bmatrix}
-1 & -0.33 & 0 & 0 & 0 & 0 \\
-1 & 0.33 & 0 & 0 & 0 & 0 \\
0 & 0 & 0 & 1 & 0 & 0 \\
0 & 0 & 0 & -1 & 0 & 0 \\
0 & 0 & 0 & 0 & 1 & 0 \\
0 & 0 & 0 & 0 & -1 & 0 
\end{bmatrix},
h_x=\begin{bmatrix}
0.3\\
0.3\\
0.4\\
0.4\\
1.5\\
1.5
\end{bmatrix},
H_u=\begin{bmatrix}
-1  & 0 \\
-1  & 0 \\
0 & 1  \\
0 & -1  
\end{bmatrix}, \text{and }
h_u=\begin{bmatrix}
1.1 \\
1.1 \\
1.1 \\
1.1 
\end{bmatrix}.
\end{align*}




FreeRTOS on the Cortex-M4 core restarts in 250 ms (upper bound). By using algorithm in \cite[Section 5.1]{7945009}, we computed readjusted safety constraint parameters as $ h^a_x = [0.1418$, $0.1418$, $0.2828$, $0.2828$, $0.0825$ ,$0.0825 ]^T$ and $H^a_x=H_x$.
Using this readjusted safety constraints the invariant region and BC are constructed using the Algorithms described in Subsections~\ref{sec:iregion} and~\ref{sec:bcruntime}. Algorithm~\ref{alg:invariant} computed a region $\mathcal{I}$ confined with 106 inequalities after 14 iterations. The offline computation took four hours on Mac Book Pro with 2.5 GHz Intel Core i7 and 16 GB of memory. Finally, the BC is derived by solving linear inequalities in (\ref{eq:scruntime}).
\subsubsection{Hardware Interface}
The control tasks on the real-time core of i.MX7D run with a frequency of 20~Hz~($\uptau_c =$ 50~ms). Our controller interfaces with the 3DOF helicopter through a PCIe-based \textit{Q8 High-Performance H.I.L. Control and data acquisition unit}~\cite{q8daq} and an intermediate Linux-based PC. The PC communicates with the i.MX7D through the serial port. At the end of every control cycle, a flushing task on the real-time core communicates with the PC to receive the sensor readings~(elevation, pitch, and travel angles) and send the motors' voltages. It also updates the hardware WD of the platform after sending the motor voltages. The PC uses a custom driver written for Linux to send the voltages to the 3DOF helicopter motors and reads the sensor values. 
\subsubsection{Testing the Base Controller}
To verify that the constructed base controller has the desired properties, we simulated the system with this controller from all vertices of region $\mathcal{I}$ as starting points and observed that the system's state at $\uptau_c$ and $\uptau_c + \uptau_r$ time units after actuation was inside $\mathcal{I}$. Figure~\ref{fig:bcsimulations} outlines one extreme example. The trajectory starts at 
$\epsilon = -0.1410$,
$\rho = 0$,
$\dot{\epsilon} = -0.0281$ and
$\dot{\rho} =  0.0513$~($\lambda$ and $\dot{\lambda}$ do not impact safety). The control command in this trajectory is $v_r = 0.6863$ and $v_l = 0.7709$. As shown in Figure~\ref{fig:bcsimulations}, the trajectory remains inside the safety region.
\begin{figure}[pht]
	\begin{center}
		\centering 
		\subfigure[Projecttion to $\epsilon$ and $\rho$]
		{\label{fig:11}\includegraphics[width=0.24\textwidth, height=3cm]{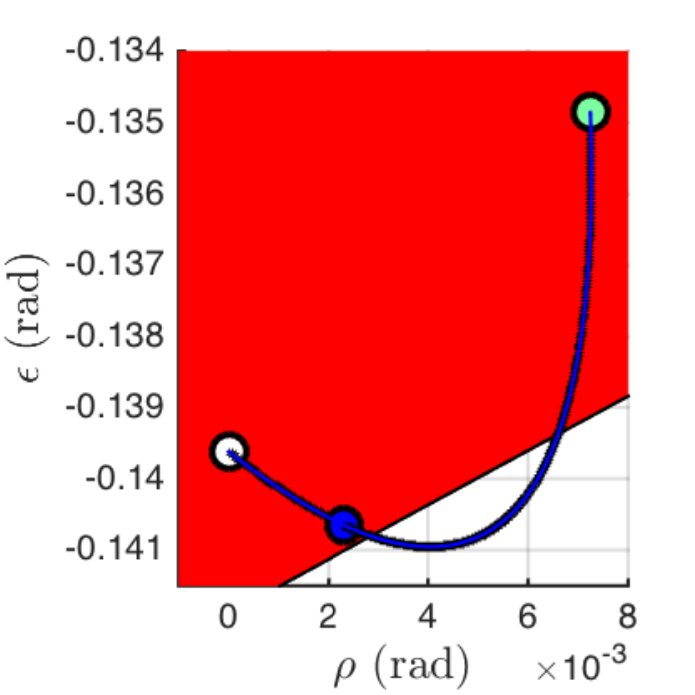}}
		\subfigure[Projecttion to $\epsilon$ and $\dot{\rho}$]
		{\label{fig:O12}\includegraphics[width=0.25\textwidth, height=3cm]{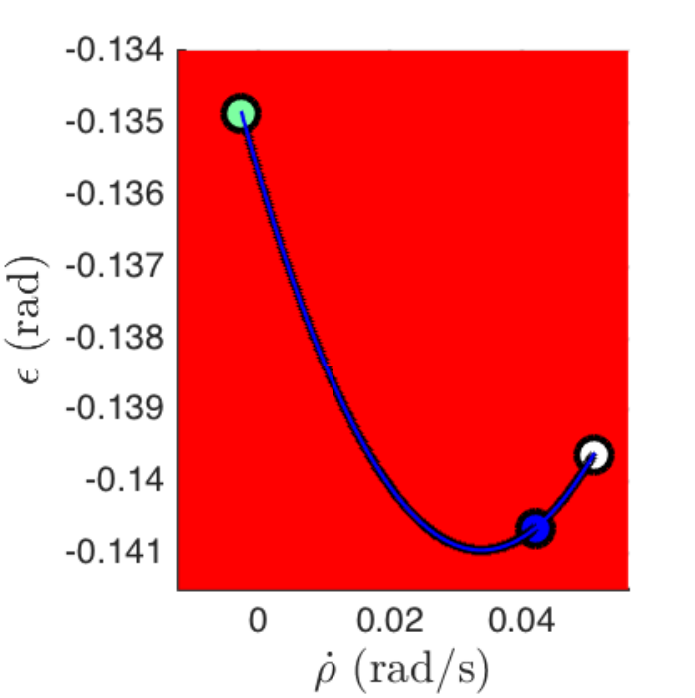}}
		\subfigure[Projecttion to $\dot{\epsilon}$ and $\rho$]
		{\label{fig:21}\includegraphics[width=0.22\textwidth, height=3cm]{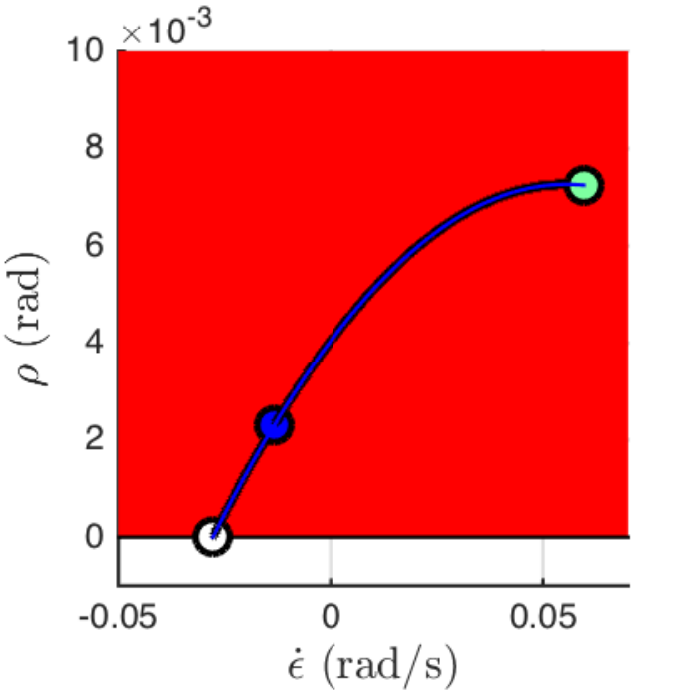}}
		\subfigure[Projecttion to $\rho$ and $\dot{\rho}$]
		{\label{fig:22}\includegraphics[width=0.22\textwidth, height=3cm]{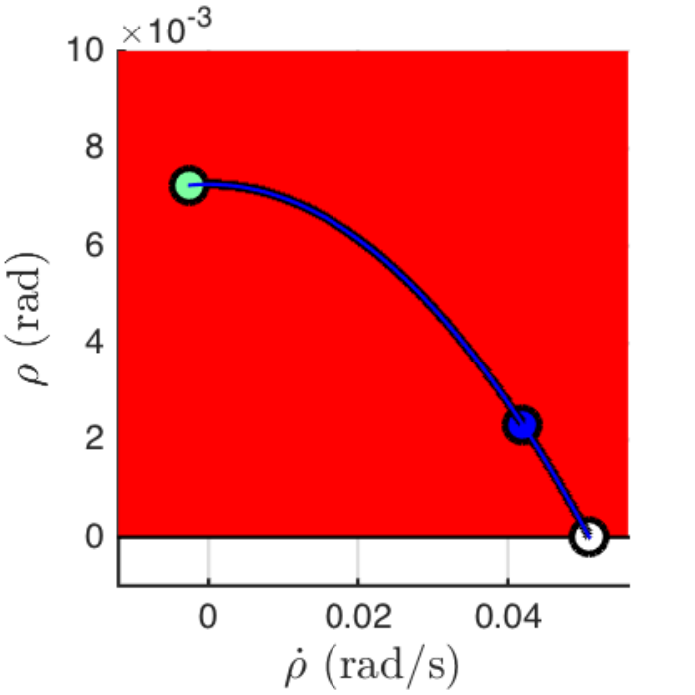}}
		\caption{Simulated trajectory of the system under $v_r = 0.6863$ and $v_l = 0.7709$ is inside $\mathcal{I}$~(red region) at times $\uptau_c = $~50~ms~(blue mark) and $\uptau_c + \uptau_r = $~300~ms~(green mark). White circles mark the beginning of the trajectory. The trajectory is projected into the four planes for clarity.}
		\label{fig:bcsimulations}
	\end{center}
\end{figure}
Further, we implemented the obtained BC on the i.MX7D platform to validate our design approach under different fault scenarios given in the next section.

\subsection{Fault Injection}
\label{faults}
In Table~\ref{table:tests}, a list of faults that were tested on the implementations are provided. We also compare them with Application-Level Simplex and System-Level Simplex. For the application-level faults, we verified that the mission controller was able to actuate the system as long as it did not jeopardize the safety and when the system states approached the states where the safety conditions violated, BC took over and ensure safety. For the system-level faults, we observed that the WD restarted the system and after restart, the system continued its operation. 
\begin{table*}[]
	\centering
	\small 
	\scalebox{0.8}{
		\begin{tabular}{|c|c|c|c|c|c|}
			\hline
			\cellcolor[HTML]{C0C0C0}{\color[HTML]{333333} }                                        & \cellcolor[HTML]{C0C0C0}{\color[HTML]{333333} }                                                                                    & \multicolumn{3}{c|}{\cellcolor[HTML]{C0C0C0}{\color[HTML]{333333} Safety}}                                                                                                                                                                                                                      & \cellcolor[HTML]{C0C0C0}                                     \\ \cline{3-5}
			\multirow{-2}{*}{\cellcolor[HTML]{C0C0C0}{\color[HTML]{333333} \textbf{Failure Type}}} & \multirow{-2}{*}{\cellcolor[HTML]{C0C0C0}{\color[HTML]{333333} \textbf{\begin{tabular}[c]{@{}c@{}}Fault\\ Category\end{tabular}}}} & \cellcolor[HTML]{9B9B9B}{\color[HTML]{333333} \begin{tabular}[c]{@{}c@{}}Application-Level\\ Simplex\\(Single HW Board/SoC)\end{tabular}} & \cellcolor[HTML]{9B9B9B}{\color[HTML]{333333} \begin{tabular}[c]{@{}c@{}}System-Level\\ Simplex\\ (Additional HW/SoC)\end{tabular}} & \cellcolor[HTML]{9B9B9B}{\color[HTML]{333333} \begin{tabular}[c]{@{}c@{}}Our Approach\\ (Single HW Board/SoC)\end{tabular}} & \multirow{-2}{*}{\cellcolor[HTML]{C0C0C0}\textbf{Restarted}} \\ \hline
			\cellcolor[HTML]{EFEFEF}No Output                                                      & App.                                                                                                                        & \cmark                                                                                                                 & \cmark                                                                                                           & \cmark                                                          & No                                                           \\ \hline
			\cellcolor[HTML]{EFEFEF}Maximum Voltage                                     & App.                                                                                                                        & \cmark                                                                                                                 & \cmark                                                                                                           & \cmark                                                          & No                                                           \\ \hline
			\cellcolor[HTML]{EFEFEF}Time Degraded Control                                          & App.                                                                                                                        & \cmark                                                                                                                 & \cmark                                                                                                           & \cmark                                                          & No                                                           \\ \hline
			\cellcolor[HTML]{EFEFEF}Timing Fault - CPU                                                 & RTOS/App.                                                                                                                     & \xmark                                                                                                                & \cmark                                                                                                           & \cmark                                                          & Yes                                                          \\ \hline
			\cellcolor[HTML]{EFEFEF}Timing Fault - Resource                                                  & RTOS/App.                                                                                                                     & \xmark                                                                                                                & \cmark                                                                                                           & \cmark                                                          & Yes                                                          \\ \hline
			\cellcolor[HTML]{EFEFEF}FreeRTOS Freeze                                                      & RTOS                                                                                                                                 & \xmark                                                                                                                & \cmark                                                                                                           & \cmark                                                          & Yes                                                          \\ \hline
			\cellcolor[HTML]{EFEFEF}Computer Reboot                                                & RTOS                                                                                                                                 & \xmark                                                                                                                & \cmark                                                                                                           & \cmark                                                          & Yes                                                          \\ \hline
	\end{tabular}}
	\caption{ Our approach tolerates system-level faults using only one hardware unit. Whereas, System-Level Simplex~\cite{bak2009system} needs an extra board/SoC to tolerate these faults.}
	\label{table:tests}
\end{table*}


Some of these faults are elaborated in the rest of this section. 


\subsubsection{Maximum Control Input in Wrong Way}
The system should not leave safe region even if the MC outputs a control input that normally would result in a crash. We consider an extreme case of this scenario where the MC generates a control input that forces system towards the unsafe region. The unsafe MC commands were detected by DM~(they did not satisfy the system safety conditions), and the control was switched to the BC until the system was in the safety region and then control was handed back to MC.




\subsubsection{Timing Faults (CPU and Resource)}
The proposed solution also protects the system from timing faults. A faulty task may behave differently in runtime from its expected/reported behavior. For instance, it may lock a particular resource used by other critical tasks for more than the intended duration. Or, it may run for more time than its reported worst-case execution time~(WCET) which was used for the schedulability test of the system. Timing faults may also originate from RTOS or driver misbehaviors. If the fault delays/stops the execution of the DM or BC, WD will trigger a system-wide restart. This recovers the system from the fault and keeps the physical system safe. We perform two experiments to test the fault-tolerance against timing faults.

In the first experiment, we run an additional task on the system that uses the serial port in parallel to the flushing task to communicate with the PC. We inject a fault into this task so that in random execution cycles, it holds the lock on the serial port for more than its intended period. This prevents the flushing task from updating the actuator~(which needs the serial port) before the end of the control cycle. As a result, WD expires and restarts the system. We verified that the system recovers from the fault and remains safe during the restart.


In the second test, we introduce a task that runs at the same priority as the BC and DM. We inject a fault into the task such that in some cycles, its execution time exceeds its reported WCET. FreeRTOS runs the tasks with equal priority using round-robin scheduling with a context switch at every 1ms. Therefore, the faulty task delays the response time of the DM and BC. If the interference is too long, the output of BC may not be ready by the time the flushing task needs to update the actuators. When this happens, WD restarts the system. 

\section{Discussion}
\label{sec:limitations}


\textit{Software Faults}: The proposed approach does not handle software faults that modify the program logic or output of the BC and the DM at the execution time. Utilizing frameworks such as ARM TrustZone~\cite{trustZone} and limiting the access to these critical components can mitigate this issue.

\textit{Restart Time}: As the restart time of the platform increases, the domain of the BC shrinks. Therefore, the proposed solution in its current form, even though useful for many platforms, may not suit some platforms with a longer restart time.

We are actively working on an alternative multi-stage booting solution for multicore platforms to mitigate this problem. Our main idea is to boot one core with the bare minimum requirements to execute the BC in the shortest possible time. The BC can keep the system safe, while the real-time or general purpose OS boots on the other cores. Once the boot process is complete, the control switches to the controllers running on the OS. As a future extension, we are working on implementing this solution on i.MX7D platform.  We first boot the real-time core with a FreeRTOS and run the BC on top of it and then boot an embedded Linux on the general purpose core.

\section{Conclusion}
\label{sec:conclusion}
Restarting is considered as a reliable way to recover the traditional computing system from software faults. However, restarting safety-critical CPS is challenging. In this work, we propose a novel approach that guarantees \textit{safety} and \textit{liveness} of nonlinear physical systems in the presence of application and system-level software faults utilizing only one COTS processor based on complete system-level restarts.







\bibliographystyle{alpha}
\bibliography{sample-bibliography}

\end{document}